\newtheorem{mydef}{Definition}
\newtheorem{theorem}{Theorem}
\newtheorem{lemma}{Lemma}
\newcommand{\indep}{\rotatebox[origin=c]{90}{$\models$}}
\newcommand{\notindep}{\not\!\perp\!\!\!\perp}
\title{Incorporating Knowledge into Structural Equation Models \\using Auxiliary Variables}
\author{Bryant Chen\\
UCLA\\
\url{bryantc@cs.ucla.edu}
\And
Judea Pearl\\
UCLA\\
\url{judea@cs.ucla.edu}
\And
Elias Bareinboim\\
Purdue University\\
\url{eb@purdue.edu}}
\begin{document}

\maketitle

\begin{abstract}
In this paper, we extend graph-based identification methods by allowing background knowledge in the form of non-zero parameter values. Such information could be obtained, for example, from a previously conducted randomized experiment, from substantive understanding of the domain, or even an identification technique. To incorporate such information systematically, we propose the addition of auxiliary variables to the model, which are constructed so that certain paths will be conveniently cancelled.  This cancellation allows the auxiliary variables to help conventional methods of identification (e.g., single-door criterion, instrumental variables, half-trek criterion), as well as model testing (e.g., d-separation, over-identification).  Moreover, by iteratively alternating steps of identification and adding auxiliary variables, we can improve the power of existing identification methods via a bootstrapping approach that does not require external knowledge. We operationalize this method for simple instrumental sets (a generalization of instrumental variables) and show that the resulting method is able to identify at least as many models as the most general identification method for linear systems known to date. We further discuss the application of auxiliary variables to the tasks of model testing and z-identification.
\end{abstract}

\section{Introduction}

Many researchers, particularly in economics, psychology, epidemiology, and the social sciences, use linear structural equation models (SEMs) to describe the causal and statistical relationships between a set of variables, predict the effects of interventions and policies, and to estimate parameters of interest \cite{bollen:pea393}.  A linear SEM consists of a set of equations of the form, 
\begin{equation*}
X = \Lambda X+ U,
\end{equation*}
where $X = [x_1 , ... , x_n]^t$ is a vector containing the model variables, $\Lambda$ is a matrix containing the \emph{coefficients} of the model, which convey the strength of the causal relationships, and $U = [u_1 , ..., u_n]^{t}$ is a vector of error terms, which represents omitted or latent variables and are assumed to be normally distributed.  The matrix $\Lambda$ contains zeroes on the diagonal, and $\Lambda_{ij} = 0$ whenever $x_i$ is not a cause of $x_j$.  The covariance matrix of $X$ will be denoted by $\Sigma$ and the covariance matrix over the error terms, U, by $\Omega$.  The entries of $\Lambda$ and $\Omega$ are the \emph{model parameters}.  In this paper, we will restrict our attention to \emph{semi-Markovian} models \cite{pearl:2k}, models where the rows of $\Lambda$ can be arranged so that it is lower triangular.  

When the coefficients are known, then total effects, direct effects, and counterfactuals can be computed from them directly \cite{pearl:09,chen:pea14}.  However, in order to be able to compute these coefficients, we must utilize domain knowledge in the form of exclusion and independence restrictions \cite[p. 704]{pearl:95}.  Exclusion restrictions represent assumptions that a given variable is not a direct cause of another, while independence restrictions represent assumptions that no latent confounders exists between two variables.  Algebraically, these assumptions translate into restrictions on entries in the coefficient matrix, $\Lambda$, and error term covariance matrix, $\Omega$, to zero.  

Determining whether model parameters can be expressed in terms of the probability distribution, which is necessary to be able to estimate them from data, is the problem of \emph{identification}.  When it is not possible to uniquely express the value of a model parameter in terms of the probability distribution, we will say that the parameter is \emph{not identifiable}.\footnote{We will also use the term ``identifiable" with respect to the model as a whole.  When the model contains an unidentified coefficient, the model is not identified.} In linear systems, this generally takes the form of expressing a parameter in terms of the covariance matrix over the observable variables.  

To our knowledge, the most general method for determining model identification is the half-trek criterion \cite{foygel:12}.
Identifying individual structural coefficients can be accomplished using the single-door criterion (i.e. identification using regression) \cite{pearl:09,chen:pea14}, instrumental variables \cite{wright:25,wright:28} (see \cite{brito:pea02a}, \cite{pearl:09}, or \cite{chen:pea14} for a graphical characterization), instrumental sets \cite{brito:pea02a}, and the general half-trek criterion \cite{chen:15}, which generalizes the half-trek criterion for individual coefficients rather than entire models.  Finally, d-separation \cite{pearl:09} and overidentification \cite{pearl:04,chen:etal14} provide the means to enumerate testable implications of the model, which can be used to test it against data.  


Each of these methods only utilize restrictions on the entries of $\Lambda$ and $\Omega$ to zero.  In this paper, we introduce \emph{auxiliary variables}, which can be used to incorporate knowledge of non-zero coefficient values into existing methods of identification and model testing.  The intuition behind auxiliary variables is simple: if the coefficient from variable $w$ to $z$ is known, then we would like to remove the direct effect of $w$ on $z$ by subtracting it from $z$.  We do this by creating a variable $z^* = z-\alpha w$ and using it as a proxy for $z$.  In some cases, $z^*$ may allow the identification of parameters or testable implications using the aforementioned methods when $z$ could not.

While intuitively simple, auxiliary variables are able to greatly increase the power of existing identification methods, even without external knowledge of coefficient values.  We propose a bootstrapping procedure whereby coefficients are iteratively identified using simple instrumental sets and then used to generate auxiliary variables, which enable the identification of previously unidentifiable coefficients.  We prove that this method enhances the instrumental set method to the extent that it is able to subsume the relatively more complex general half-trek criterion (henceforth, g-HTC).  

The notion of ``subtracting out a direct effect'' in order to turn a variable into an instrument was first noted by \cite{shardell:15} when attemping to identify the total effect of $x$ on $y$. It was noticed that in certain cases, the violation of the independence restriction of a potential instrument $z$ (i.e. $z$ is not independent of the error term of $y$) could be remedied by identifying, using ordinary least squares regression, and then subtracting out the necessary direct effects on $y$.  In this paper, we generalize and operationalize this notion so that it can be used on arbitrary sets of known coefficient values and be utilized in conjunction with graphical methods for identification and enumeration of testable implications. 

The paper is organized as follows: Sec. \ref{sec:prelim} reviews notation and graphical notions that will be used in the paper.  In sec. \ref{sec:EC}, we introduce and formalize auxiliary variables and auxiliary instrumental sets.  Additionally, we give a sufficient graphical condition for the identification of a set of coefficients using auxiliary instrumental sets.  In sec. \ref{sec:IS}, we show that auxiliary instrumental sets subsume the g-HTC.  Finally, in sec. \ref{sec:discuss}, we discuss additional applications of auxiliary variables, including identifying testable implications and z-identification \cite{bareinboim:pea12}. 

\section{Preliminaries}
\label{sec:prelim}

The causal graph or path diagram of a SEM is a graph, $G=(V,D,B)$, where $V$ are nodes or vertices, $D$ directed edges, and $B$ bidirected edges.  The nodes represent model variables.  Directed eges encode the direction of causality, and for each coefficient $\Lambda_{ij}\neq 0$, an edge is drawn from $x_i$ to $x_j$.  Each directed edge, therefore, is associated with a coefficient in the SEM, which we will often refer to as its structural coefficient.  The error terms, $u_i$, are not shown explicitly in the graph.  However, a bidirected edge between two nodes indicates that their corresponding error terms may be statistically dependent while the lack of a bidirected edge indicates that the error terms are independent.  

If a directed edge, called $(x, y)$, exists from $x$ to $y$ then $x$ is a parent of $y$.  The set of parents of $y$ is denoted $Pa(y)$.  Additionally, we call $y$ the head of $(x,y)$ and $x$ the tail.  The set of tails for a set of directed edges, $E$, is denoted $Ta(E)$ while the set of heads is denoted $He(E)$.  For a node, $v$, the set of edges for which $He(E)=v$ is denoted $Inc(v)$. Finally, the set of nodes connected to $y$ by a bidirected arc are called the siblings of $y$ or $Sib(y)$. 

A \emph{path} from $x$ to $y$ is a sequence of edges connecting the two nodes.  A path may go either along or against the direction of the edges.  A non-endpoint node $w$ on a path is said to be a \emph{collider} if the edges preceding and following $w$ both point to $w$.  

A path between $x$ and $y$ is said to be \emph{unblocked given a set $Z$}, with $x, y\notin Z$ if every noncollider on the path is not in $Z$ and every collider on the path is in $An(Z)$ \cite{pearl:09}, where $An(Z)$ are the ancestors of $Z$.  Unblocked paths of the form $a\rightarrow ... \rightarrow b$ or $a \leftarrow ... \leftarrow b$ are \emph{directed} paths.  Any unblocked path that is not a directed path is a \emph{divergent} path.

$\sigma (x, y)$ denotes the covariance between two random variables, $x$ and $y$, and $\sigma_M (x,y)$ is the covariance between random variables $x$ and $y$ induced by the model $M$.  $(x\indep y)$ denotes that $x$ is independent of $y$, and similarly, $(x\indep y)_M$ denotes that $x$ is independent of $y$ according to the model, $M$.  We will assume without loss of generality that the model variables have been standardized to mean 0 and variance 1.

We will also utilize a number of definitions around half-treks \cite{foygel:12}.

\begin{figure*}
\centering
\begin{subfigure}[t]{.21\textwidth}
\caption{}
\label{fig:IS+Ra}
\includegraphics[width=\textwidth]{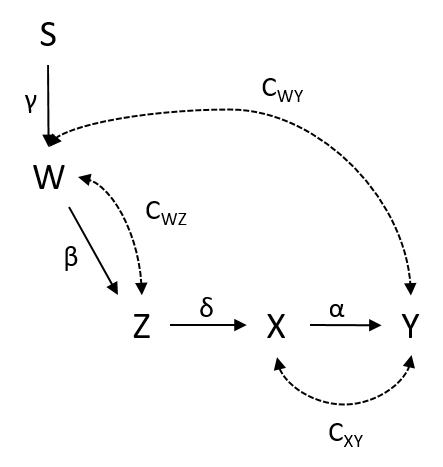}
\end{subfigure}
\begin{subfigure}[t]{.24\textwidth}
\caption{}
\label{fig:IS+Rb}
\includegraphics[width=\textwidth]{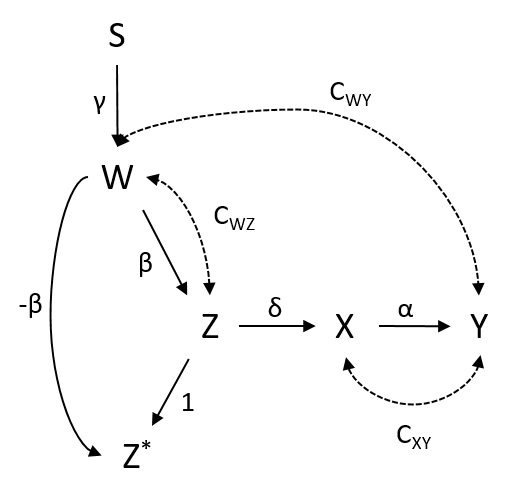}
\end{subfigure}
\caption{(a) $\alpha$ is not identified using instruments (b) The $\beta$-augmented graph, where $\alpha$ is identified using $z^*$ as an instrument}
\end{figure*}

\begin{figure}
\centering
\begin{subfigure}[t]{.128\textwidth}
\caption{}
\label{fig:de_aux1}
\includegraphics[width=\textwidth]{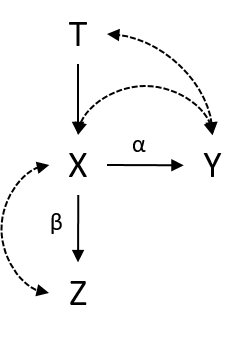}
\end{subfigure}
\begin{subfigure}[t]{.105\textwidth}
\caption{}
\label{fig:de_aux2}
\includegraphics[width=\textwidth]{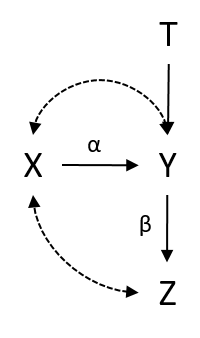}
\end{subfigure}
\caption{$\alpha$ is identified in (a) and (b) using $z^*$ as an auxiliary instrument after identification of $\beta$ using $t$ as an instrument}
\end{figure}

\begin{mydef}
A \emph{half-trek}, $\pi$, from $x$ to $y$ is an unblocked path from $x$ to $y$ that either begins with a bidirected arc and then continues with directed edges towards $y$ or is simply a directed path from $x$ to $y$. 
\end{mydef}
\noindent We will denote the set of nodes connected to a node, $v$, via half-treks $htr(v)$.  For example, in Figure \ref{fig:IS-ECa}, $w \leftrightarrow z \rightarrow x \rightarrow y$ and $w\rightarrow z\rightarrow x\rightarrow y$ are both half-treks from $w$ to $y$.  However, $z^* \leftarrow w \rightarrow z$ in Figure \ref{fig:IS-ECb} is not a half-trek from $z^*$ to $z$ because it begins with an arrow pointing to $z^*$.

\begin{mydef}
For a given path, $\pi$, from $x$ to $y$, Left($\pi$) is the set of nodes, if any, that has a directed edge leaving it in the direction of $x$ in addition to $x$.  Right($\pi$) is the set of nodes, if any, that has a directed edge leaving it in the direction of $y$ in addition to $y$.  
\end{mydef}

For example, consider the path $\pi = x\leftarrow v^L_1 \leftarrow ... \leftarrow v^L_k \leftarrow v^T \rightarrow v^R_j \rightarrow ... \rightarrow v^R_1 \rightarrow y$.  In this case, Left($\pi$) $= \cup_{i=1}^{k} v^L_i \cup \{x, v^T\}$ and Right($\pi$) $= \cup_{i=1}^{j} v^R_i \cup \{y, v^T\}$.  $v^T$ is a member of both Right$(\pi)$ and Left($\pi)$.  

\begin{mydef}
A set of paths, ${\pi_1, ..., \pi_n}$, has \emph{no sided intersection} if for all $\pi_i , \pi_j \in \{\pi_1, ..., \pi_n\}$ such that $\pi_i \neq \pi_j$, Left$(\pi_i)\cap$Left$(\pi_j)$=Right$(\pi_i)\cap$Right$(\pi_j)=\emptyset$.
\end{mydef}

Consider the set of paths $\{\pi_1 = x\rightarrow y, \pi_2 = z\leftrightarrow x \rightarrow w\}$. This set has no sided intersection, even though both paths contain $x$, because Left($\pi_1$) $= \{x\}$, Left($\pi_2$) $= \{z\}$, Right($\pi_1$) $= \{y\}$, and Right($\pi_2$) $= \{x,w\}$.  In contrast, $\{\pi_1 = x\rightarrow y, \pi_2 = z\rightarrow x\rightarrow w\}$ does have a sided intersection because $x$ is in both Right($\pi_1$) and Right($\pi_2$).

Wright's rules \cite{wright:21} allows us to equate the model-implied covariance, $\sigma_M (x,y)$, between any pair of variables, $x$ and $y$, to the sum of products of parameters along unblocked paths between $x$ and $y$.\footnote{Wright's rules characterize the relationship between the covariance matrix and model parameters.  Therefore, any question about identification using the covariance matrix can be decided by studying the solutions for this system of equations.  However, since these equations are polynomials and not linear, it can be very difficult to analyze identification of models using Wright's rules \cite{brito:04}.}  Let $\Pi= \{\pi_1 , \pi_2 ,... , \pi_k\}$ denote the unblocked paths between $x$ and $y$, and let $p_i$ be the product of structural coefficients along path $\pi_i$.  Then the covariance between variables $x$ and $y$ is $\sum_i p_i$.
We will denote the expression that Wright's rules gives for $\sigma(x,y)$ in graph $G$, $W_G (x,y)$.  

Instrumental variables (IVs) is one of the most common methods of identifying parameters in linear models.  The ability to use an instrumental set to identify a set of parameters when none of those parameters are identifiable individually using IVs was first proposed by \cite{brito:pea02a}.  

\begin{figure*}
\centering
\begin{subfigure}[t]{.32\textwidth}
\caption{}
\label{fig:IS-ECa}
\includegraphics[width=\textwidth]{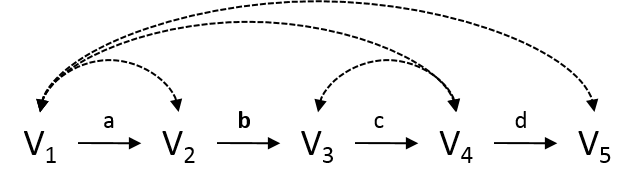}
\end{subfigure}
\begin{subfigure}[t]{.32\textwidth}
\caption{}
\label{fig:IS-ECb}
\includegraphics[width=\textwidth]{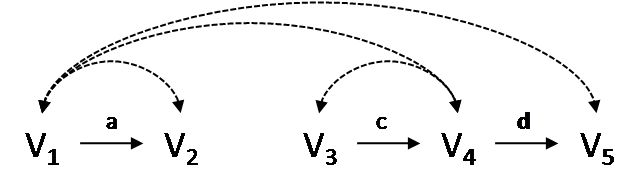}
\end{subfigure}
\begin{subfigure}[t]{.32\textwidth}
\caption{}
\label{fig:IS-ECc}
\includegraphics[width=\textwidth]{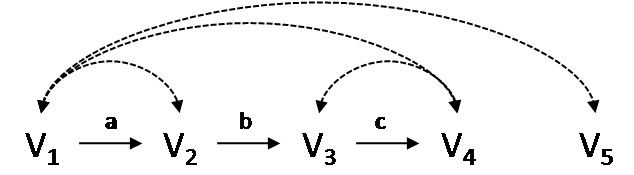}
\end{subfigure}
\caption{(a) $b$ is identified using either $v_2$ or $v_1$ as an instrument (b) $d$ is identified using $v_3^*$ as an auxiliary instrument (c) $a$ and $c$ are identified using $v_5^*$ as an auxiliary instrument}
\end{figure*}

\begin{mydef}[Simple Instrumental Set]
\label{def:IVset}
$Z$ is a \emph{simple instrumental set} for the coefficients associated with edges $E = \{x_1 \rightarrow y, ..., x_k \rightarrow y\}$ if the following conditions are satisfied.
\begin{enumerate}[(i)]
\item $|Z| = k$.
\item Let $G_E$ be the graph obtained from $G$ by deleting edges $x_1 \rightarrow y, ..., x_k \rightarrow y$. Then, $(z_i\indep y)_{G_E}$ for all $i\in \{1, 2, ..., k\}$.\footnote{This condition can also be satisfied by conditioning on a set of covariates without changing the results below, but for simplicity we will not consider this case. When conditioning on a set of covariates, $Z$ is called a \emph{generalized instrumental set}.}
\item There exist unblocked paths $\Pi = \{\pi_1 , \pi_2 ,... , \pi_k\}$ such that $\pi_i$  is an unblocked path from $z_i$ to $x_i$ and $\Pi$ has no sided intersection.
\end{enumerate}
\end{mydef}

If $Z$ is a simple instrumental set for $E$, then we can use Wright's rules to obtain a set of $|k|$ linearly independent equations in terms of the coefficients, enabling us to solve for the coefficients \cite{brito:pea02a}.  

\section{Auxiliary Variables}
\label{sec:EC}

We start this section by motivating auxiliary variables through an example. Consider the structural system depicted in Figure \ref{fig:IS+Ra}.  In this system, the structural coefficient $\alpha$ is not identifiable using instrumental variables or instrumental sets. To witness, note that $z$, $w$, and $s$ all fail to qualify as instruments due to the spurious paths, $z\leftarrow w \leftrightarrow y$, $w\leftrightarrow y$, and $s\leftrightarrow y$, respectively\footnote{Note that even if we consider \emph{conditional instruments} \cite{brito:pea02a}, these paths cannot be blocked, and identification is not possible.}.  If the coefficient $\beta$ is known,\footnote{The coefficient $\beta$ may be available through different means, for instance, from a smaller randomized experiment, pilot study, or substantive knowledge, just to cite a few. In this specific case, however, $\beta$ can be identified directly without invoking external information by simply using $S$ as an instrument.} we can add an \emph{auxiliary variable}, $z^* = z-\beta w$, to the model.  Subtracting $\beta w$ from $z$ cancels the effect of $w$ on $z$ so that $w$ has no effect on $z^* = (\beta w + u_w) - \beta w = u_w$.  Now,  $z^*$ is an instrument for $\alpha$.  The sum of products of parameters along back-door paths from $z^*$ to $y$ is equal to 0 and $\alpha = \frac{\sigma(z^*,y)}{\sigma(z^*,x)}$.  

Surprisingly, auxiliary variables can even be used to generate instruments from effects of $x$ and $y$.  For example, consider Figures \ref{fig:de_aux1} and \ref{fig:de_aux2}.  In both examples, $z$ is clearly not an instrument for $\alpha$.  However, in both cases, $\beta$ is identifiable using $t$ as an instrument, allowing us to construct the auxiliary variable, $z^* = z-\alpha x$, which does qualify as an instrument for $\alpha$ (see Theorem \ref{thm:aux-IS} below).  

The following definition establishes the $\beta$-\emph{augmented} model, which incorporates the $z^*$ variable into the model.\footnote{\cite{chan:kuroki10} also gave a graphical criterion for identification of a coefficient using descendants of $x$.  $\alpha$ in Figure \ref{fig:de_aux1} can also be identified using their method.}

\begin{mydef}
\label{def:aug}
Let $M$ be a structural causal model with associated graph $G$ and a set of directed edges $E$ such that their coefficient values are known. The \emph{$E$-augmented} model, $M^{E+}$, includes all variables and structural equations of $M$ in addition to new auxiliary variables, $y^*_1 , ... y^*_k$, one for each variable in $He(E)=\{y_1 , ... , y_k\}$  such that the structural equation for $y^*_i$ is $y^*_i = y_i - \Lambda_{X_i y_i} X^t_i$, where $X_i=Ta(E)\cap Pa(y_i)$, for all $i\in \{1 ,... , k\}$.  The corresponding graph is denoted $G^{E+}$.  

\end{mydef}


For example, let $M$ and $G$ be the model and graph depicted in Figure \ref{fig:IS+Ra}.  The $\beta$-augmented model is obtained by adding a new variable $z^* = z - \beta w$ to $M$.  The corresponding graph, $G^{\beta+}$, is shown in Figure \ref{fig:IS+Rb}.  The following lemma establishes that the covariance between any two variables in $V^* = V \cup He^*(E)$ can be obtained using Wright's rules on $G^{E+}$, where $V$ is the set of variables in $M$ and $He^*(E)$ is the set of variables added to the augmented model.\footnote{Note that auxiliary variables may not have a variance of 1.  We will see that this does not affect the results of the paper since the covariance between model variables implied by the graph is correct, even after the addition of auxiliary variables.}

\begin{lemma}
\label{lem:aug}
Given a linear structural model, $M$, with induced graph $G$, and a set of directed edges $E$ with known coefficient values, $\sigma (w,v) = W_{G^{E+}}(w,v)$, where $w,v \in V^*$ and $w\neq v$.\footnote{See Appendix for proofs of all lemmas.}
\end{lemma}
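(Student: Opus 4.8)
The plan is to reduce the claim for the augmented graph $G^{E+}$ to Wright's rules on the original graph $G$ (which we may assume) together with the linearity of covariance. The one fact I would record at the outset is structural: by Definition \ref{def:aug}, each auxiliary node $y_i^*$ is \emph{childless} in $G^{E+}$, its only incident edges being the directed edges $p \to y_i^*$ from its parents $p \in \{y_i\}\cup X_i$ (coefficient $1$ for $p=y_i$ and $-\Lambda_{x y_i}$ for $p = x \in X_i$), and it carries no error term of its own, hence no incident bidirected edge. Consequently, whenever an auxiliary node occurs as a non-endpoint of a path, both of its incident edges point into it, so it is a collider; since we compute unconditional covariances, every such path is blocked. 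This immediately settles the case $w,v\in V$: no unblocked path between two original variables passes through an auxiliary node, so the unblocked paths are exactly those of $G$, giving $W_{G^{E+}}(w,v)=W_G(w,v)=\sigma(w,v)$.

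The core step is a trek-factorization at an auxiliary endpoint. Fixing an auxiliary node $y_i^*$ and any $v\in V^*\setminus\{y_i^*\}$, I would argue that, because $y_i^*$ is a childless node with no bidirected edges, every unblocked path from $y_i^*$ to $v$ must begin $y_i^*\leftarrow p$ for some parent $p\in\{y_i\}\cup X_i$, and its continuation is an unblocked path from $p$ to $v$ avoiding $y_i^*$; conversely, prepending $y_i^*\leftarrow p$ to any such path keeps $p$ a non-collider and produces an unblocked path from $y_i^*$. Under this bijection the path-product simply acquires the factor $(\text{coef } p\to y_i^*)$, the top-of-trek factor being unchanged, and since paths through $y_i^*$ contribute nothing to $W_{G^{E+}}(p,v)$, summing yields
\[
W_{G^{E+}}(y_i^*,v) \;=\; W_{G^{E+}}(y_i,v)\;-\;\sum_{x\in X_i}\Lambda_{x y_i}\,W_{G^{E+}}(x,v).
\]
On the probabilistic side, the defining equation $y_i^* = y_i - \sum_{x\in X_i}\Lambda_{x y_i}\,x$ and bilinearity of covariance give the matching identity $\sigma(y_i^*,v)=\sigma(y_i,v)-\sum_{x\in X_i}\Lambda_{x y_i}\,\sigma(x,v)$.

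I would then conclude by a short induction on the number of auxiliary endpoints among $\{w,v\}$. The zero case is the $w,v\in V$ case above. If exactly one endpoint, say $v=y_i^*$, is auxiliary and $w\in V$, the two displayed identities combine with the already-established equalities $W_{G^{E+}}(y_i,w)=\sigma(y_i,w)$ and $W_{G^{E+}}(x,w)=\sigma(x,w)$ to give $W_{G^{E+}}(y_i^*,w)=\sigma(y_i^*,w)$. If both endpoints are auxiliary, applying the factorization at the $y_i^*$ end reduces matters to covariances $W_{G^{E+}}(p,y_j^*)$ with $p$ original, which are precisely the one-auxiliary-endpoint case just settled, and the same linear combination reproduces $\sigma(y_i^*,y_j^*)$.

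I expect the main obstacle to be the trek-factorization step: one must verify that the correspondence between paths is exact (no unblocked path is created or lost when the auxiliary sink is attached or detached, which is exactly where the collider observation is essential) and that the product of coefficients, including whatever variance or error-covariance factor is contributed by the top of the trek, transforms by precisely the edge coefficient $(\text{coef } p\to y_i^*)$ and by nothing else. Conceptually this is just the statement that $M^{E+}$ is itself a bona fide linear model, so that Wright's rules must hold for it; everything else is bookkeeping on top of Wright's rules for $G$ and the linearity of covariance.
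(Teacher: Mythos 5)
Your proposal is correct and follows essentially the same route as the paper's own proof: a case split on how many of the two endpoints are auxiliary, bilinearity of covariance on the model side, and decomposition of $W_{G^{E+}}$ at the auxiliary node using the fact that every edge incident to $y_i^*$ points into it (so auxiliary nodes are colliders, hence never internal to unblocked paths). Your explicit collider argument and induction framing merely make rigorous what the paper treats as ``clear'' (and delegates to its Lemma \ref{lem:nointernalZ*}), so there is no substantive difference.
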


The above lemma guarantees that the covariance between variables implied by the augmented graph is correct, and Wright's rules can be used to identify coefficients in the model $M$.  For example, using Wright's rules on $G^{\beta+}$, depicted in Figure \ref{fig:IS+Rb}, yields
\begin{align*} \sigma(z^* , y) =& (1*\beta - \beta) (C_{WY} + \gamma C_{SY}) \\ &+(1-\beta^2-\beta C_{WZ})\delta \alpha \\ 
=& \alpha(\delta- \beta^2*\delta - \beta*C_{wz} *\delta) \end{align*} 
and 
\begin{align*}\sigma(z^*, x) =& \delta -\beta^2*\delta - \beta*C_{wz} *\delta
\end{align*} so that $\alpha = \frac{\sigma(z^* , y )}{\sigma(z^* , x)}$.  As a result, $z^*$ can be used as an instrumental variable for $\alpha$ when $z$ clearly could not.  

\begin{mydef}[Auxiliary Instrumental Set]
\label{def:auxIS}
Given a semi-Markovian linear SEM with graph $G$ and a set of directed edges $E_Z$ whose coefficient values are known, we will say that a set of nodes, $Z$, in $G$ is an \emph{auxiliary instrumental set} or \emph{aux-IS} for $E$ if $Z^* = (Z\setminus A)\cup A^*$ is an instrumental set for $E$ in $G^{E_Z +}$, where $A$ is the set of variables in $Z$ that have auxiliary variables in $G^{E_Z+}$. 
\end{mydef}

The following lemma characterizes when an auxiliary variable will be independent of a model variable and is used to prove Theorem \ref{thm:aux-IS}.

\begin{lemma}
\label{lem:sep}
Given a semi-Markovian linear SEM with graph $G$, $(z^*\indep y)_{G^{E_{z}+}}$ if and only if $z$ is d-separated from $y$ in $G_{E_z}$, where $E_z\subseteq Inc(z)$ and $G_{E_z}$ is the graph obtained when $E_z$ is removed from $G$. 
\end{lemma}
%

The following theorem provides a simple method for recognizing auxiliary instrumental sets using the graph, $G$.

\begin{theorem}
\label{thm:aux-IS}
Let $E_Z$ be a set of directed edges whose coefficient values are known.  A set of directed edges, $E=\{(x_1, y) , ..., (x_k, y)\}$, in a graph, $G$, is identified if there exists $Z$ such that:
\begin{enumerate}
\item $|Z|=k$,
\item for all $z_i \in Z$, $(z_i \indep y)_{G_{E\cup E_{z_i}}}$, where $E_{z_i}=E_Z\cap Inc(z_i)$ and $G_{E\cup E_{z_i}}$ is the graph obtained by removing the edges in $E\cup E_{z_i}$ from $G$, and
\item there exists unblocked paths $\Pi = \{\pi_1 , \pi_2 ,... , \pi_k\}$ such that $\pi_i$ is a path from $z_i$ to $x_i$ and $\Pi$ has no sided intersection.
\end{enumerate}
\end{theorem}
If the above conditions are satisfied then $Z$ is an auxiliary instrumental set for $E$.
\begin{proof}
We will show that $Z^*$ is an instrumental set in $G^{E_Z+}$.  First, note that if $E_{Z} = \emptyset$, then $Z$ is an instrumental set in $G$ and we are done.  We now consider the case when $E_{Z} \neq \emptyset$.  Since $|Z^*|=|Z|-|A|+|A*| = |Z|-|A|+|A|=|Z|$, $|Z^*|=|E|$, IS-(i) is satisfied.  Now, we show that IS-(iii) is satisfied.  For each $z_i \in Z$, let $\pi_{z_i} \in \Pi$ be the path in $\Pi$ from $z_i$ to $Ta(E)$.  Now, for each $a^*_i\in A^*$, let $\pi_{a^*_i}$ be the concatenation of path $a^* \leftarrow a$ with $\pi_{a_i}$.  It should be clear that $\Pi\setminus \{\pi_{a_i}\} \cup \{\pi_{a^*_i}\}$ satisfies IS-(iii) in $G^{E_{Zy}+}$.  Lastly, we need to show that IS-(ii) is also satisfied.   

First, if $z_i \in Z\setminus A$, then $(z_i \indep y)_{G_E}$.  It follows that $(z_i \indep y)_{G^{E_Z+}_E}$ since no new paths from $z_i$ to $y$ can be generated by adding the auxiliary nodes (see Lemma 8 in Appendix).  Now, we know that $(a^*_i \indep y)_{G^{E_{a_i}+}_E}$ from (ii) and Lemma \ref{lem:sep}.  Finally, since adding auxiliary variables cannot generate new paths between the existing nodes, we know that $(a^*_i \indep y)_{G^{E_Z+}}$, and we are done.  

$(a^*_i\indep y)_{G^{E_{Z}+}_E}$ for all $a_i \in A$ follows from (ii), Lemma \ref{lem:sep}, and the fact that no new paths from $a_i$ to $y$ can be generated by adding auxiliary nodes, proving the theorem.
\end{proof}

To see how Theorem \ref{thm:aux-IS} can be used to identify auxiliary instrumental sets, consider Figure \ref{fig:IS-ECa}.  Using instrumental sets, we are able to identify $b$, but no other coefficients.  Once $b$ is identified, $d$ can be identified using $v_3^*$ as an instrument in $G^{b+}$ since $v_3$ qualifies as an instrument for $d$ when the edge for $b$ is removed (see Figure \ref{fig:IS-ECb}).\footnote{Note that if $|Z|=1$, then the conditions of Theorem \ref{thm:aux-IS} are satisfied if $Z$ is an instrumental set in $G_{E\cup E_{z}}$.}  Now, the identification of $d$ allows us to identify $a$ and $c$ using $v_5^*$ in $G^{d+}$, since $v_5$ is an instrument for $a$ and $c$ when the edge for $d$ is removed (see Figure \ref{fig:IS-ECc}).  

The above example also demonstrates that certain coefficients are identified only after using auxiliary instrumental sets iteratively.  We now define \emph{aux-IS identifiability}, which characterizes when a set of coefficients is identifiable using auxiliary instrumental sets.

\begin{mydef}[Aux-IS Identifiability]
\label{def:IS+EC}
Given a graph $G$, a set of directed edges $E$ is \emph{aux-IS identifiable} if there exists a sequence of sets of directed edges $(E_1 , E_2 , ... E_k)$ s.t.
\begin{enumerate}[(i)]
\item $E_1$ is identified using instrumental sets in $G$,
\item $E_i$ is identified using auxiliary instrumental sets for all $i\in \{2, 3, ... , k\}$ in $G^{E^{'}+}$ where $E^{'}\subseteq E_1 \cup E_2 \cup ... \cup E_{i-1}$,
\item and $E$ is identified using auxiliary instrumental sets in $G^{E_R+}$, where $E_R \subseteq (E_1 \cup E_2 \cup ... \cup E_k)$.
\end{enumerate}
\end{mydef}

\section{Auxiliary Instrumental Sets and the Half-Trek Criterion}
\label{sec:IS}

In this section, we explore the power of auxiliary instrumental sets, ultimately showing that they are at least as powerful as the g-HTC.  Having defined auxiliary instrumental sets, we now briefly describe the g-HTC. The g-HTC is a generalization of the half-trek criterion that allows the identification of arbitrary coefficients rather than the whole model \cite{chen:15}.\footnote{If any coefficient is not identified, then the half-trek criterion algorithm will simply output that the model is not identified.}  First, we give the definition for the general half-trek criterion, then we will discuss how it can be used to identify coefficients before showing that any g-HTC identifiable coefficient is also aux-IS identifiable.  

\begin{mydef}[General Half-Trek Criterion] Let $E$ be a set of directed edges sharing a single head $y$.  A set of variables $Z$ satisfies the \emph{general half-trek criterion} with respect to $E$, if 

\begin{enumerate}[(i)]
\item $|Z|=|E|$,
\item $Z\cap({y}\cup Sib(y))=\emptyset$, 
\item There is a system of half-treks with no sided intersection from $Z$ to $Ta(E)$, and
\item $(Pa(y)\setminus Ta(E))\cap htr(Z) = \emptyset$.
\end{enumerate}
\end{mydef}

A set of directed edges, $E$, sharing a head $y$ is identifiable if there exists a set, $Z_E$, that satisfies the general half-trek criterion (g-HTC) with respect to $E$, and $Z_E$ consists only of ``allowed" nodes.  Intuitively, a node $z$ is allowed if $E_{zy}$ is identified or empty, where $E_{zy}$ is the set of edges that \begin{enumerate}[(i)]
\item lie on half-treks from $y$ to $z$ or
\item lie on paths between $z$ and $Pa(y) \setminus Ta(E)$.
\end{enumerate}

We will continue to use the $E_{Zy}$ notation and allow $Z$ to be a set of nodes.  
When $Z$ satisfies the g-HTC and consists only of allowed nodes for $E$, we say that $Z$ is a \emph{g-HT admissible set} for $E$.  If a g-HT admissible set exists for $E$, then the coefficients in $E$ are \emph{g-HT identifiable}.  The lemma below characterizes when a set of parameters is g-HT identifiable.  This characterization parallels Definition \ref{def:IS+EC} and will prove useful in the proofs to follow.  

\begin{figure}
\centering
\begin{subfigure}[t]{.13\textwidth}
\caption{}
\label{fig:zIDa}
\includegraphics[width=\textwidth]{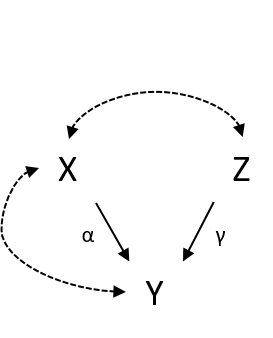}
\end{subfigure}
\begin{subfigure}[t]{.136\textwidth}	
\caption{}
\label{fig:zIDb}
\includegraphics[width=\textwidth]{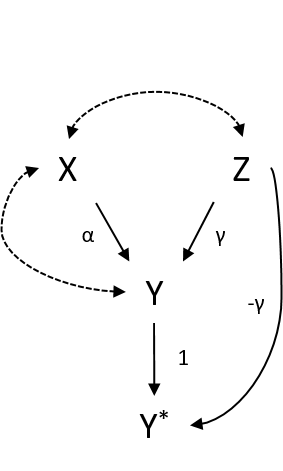}
\end{subfigure}
\begin{subfigure}[t]{.141\textwidth}
\caption{}
\label{fig:zIDc}
\includegraphics[width=\textwidth]{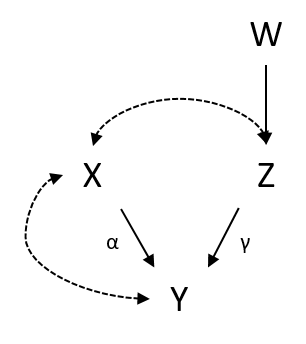}
\end{subfigure}
\caption{(a) The effect of $x$ on $y$ is not identifiable in non-parametric models, even with experiments over $z$ (b) The augmented graph, $G^{\gamma+}$, where $\alpha$ is identified using $z$ as a quasi-instrument, if we assume linearity (c) $\{x, w\}$ is an instrumental set for $\{\alpha,\gamma\}$.}
\end{figure}

\begin{lemma}
\label{lem:g-HT-ID}
If a set of directed edges, $E$, is g-HT identifiable, then there exists sequences of sets of nodes, $(Z_1 , ... Z_k)$, and sets of edges, $(E_1 , ... , E_k)$, such that 
\begin{enumerate}
\item $Z_i$ satisfies the g-HTC with respect to $E_i$ for all $i\in \{1, ... , k\}$,
\item $E_{Z_1 y_1}=\emptyset$, where $y_i = He(E_i)$ for all $i\in \{1, ... , k\}$, and
\item $E_{Z_i y_i}\subseteq (E_1 \cup ... \cup E_{i-1})$ for all $i\in \{1,... k\}$.
\end{enumerate}
\end{lemma}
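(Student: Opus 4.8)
The plan is to prove this by \emph{unrolling} the recursive definition of g-HT identifiability into a single linear ordering. Recall that $E$ being g-HT identifiable means there is a g-HT admissible set $Z_E$ for $E$, i.e.\ $Z_E$ satisfies the g-HTC with respect to $E$ and every $z\in Z_E$ is \emph{allowed}, which in turn requires the edge set $E_{zy}$ to be either empty or itself (g-HT) identified. This is an inherently recursive condition, and conditions (1)--(3) of the lemma are precisely a ``flattened'' witness of that recursion: a sequence in which each edge set $E_i$ is solved by a set $Z_i$ satisfying the g-HTC, the very first set needs no prior identification ($E_{Z_1 y_1}=\emptyset$), and each subsequent set relies only on edges identified earlier in the sequence ($E_{Z_i y_i}\subseteq E_1\cup\cdots\cup E_{i-1}$). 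I would therefore proceed by strong induction on the depth of the recursion tree witnessing identifiability (equivalently, on the number of edges whose prior identification is required).

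For the base case, I would argue that the recursion must bottom out: since $G$ is finite and the ``allowed'' requirement forces the preconditioning edges $E_{zy}$ to constitute a strictly earlier stage of identification, following the chain of dependencies terminates at some edge set $E_1$ that admits an admissible set $Z_1$ all of whose nodes are allowed with $E_{z y_1}=\emptyset$. Hence $E_{Z_1 y_1}=\emptyset$ and $Z_1$ satisfies the g-HTC with respect to $E_1$, establishing conditions (1) and (2) for the first element.

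For the inductive step, given that $E$ is g-HT identifiable via the admissible set $Z_E$, every node $z\in Z_E$ has $E_{zy}$ either empty or g-HT identifiable by a strictly shallower derivation. By the induction hypothesis each such nonempty $E_{zy}$ possesses its own sequence satisfying (1)--(3). I would then \emph{concatenate} these sub-sequences, delete duplicated edge sets, and order the result as a topological sort of the dependency relation ``$E'$ is needed to make some $Z$ admissible for $E''$,'' finally appending $(Z_E, E)$ as the last element. Condition (1) is inherited directly from the recursive derivation of each element; condition (2) follows from the base-case analysis applied to whichever element sorts first; and condition (3) follows because the quantity $E_{Z_i y_i}$ is fixed by the graph and the $E_i$ alone (not by the chosen order), so a valid topological order places all of its members before $E_i$.

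The main obstacle is the bookkeeping in the concatenation: I must verify that the dependency relation among edge sets is \emph{acyclic}, so that a topological order exists at all, and that merging overlapping sub-sequences preserves (3). Acyclicity should follow from the well-foundedness of the recursive definition of ``allowed'' (an edge set can only require strictly earlier-identified edge sets, never itself), which is exactly what guarantees the induction terminates. I would also take care to confirm that the edges needed to make $Z_i$ admissible coincide with $E_{Z_i y_i}$ as defined --- the edges lying on half-treks from $y_i$ to $Z_i$ or on paths between $Z_i$ and $Pa(y_i)\setminus Ta(E_i)$ --- so that condition (3) is the exact algebraic translation of ``every node of $Z_i$ is allowed.''
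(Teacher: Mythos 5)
Your proposal is correct in substance but takes a genuinely different route from the paper: the paper's entire proof of this lemma is a one-line citation to Theorem 1 of \cite{chen:15}, i.e., it delegates the claim to the correctness of the iterative g-HTC identification algorithm, whose stage-by-stage operation (identify some head-sharing edge sets with no prerequisites, then edge sets whose prerequisites were identified at earlier stages, and so on) already \emph{is} the sequence $(Z_1, E_1), \dots, (Z_k, E_k)$ the lemma asks for. You instead reconstruct this content from first principles by well-founded induction on the recursive definition of ``allowed'' nodes, flattening the derivation tree into a linear order. What your approach buys is self-containment: it makes explicit the fact the citation leaves implicit, namely that g-HT identifiability is a least-fixpoint notion, so every identifiable edge set admits a finite, acyclic derivation that can be unrolled; what the paper's approach buys is that all of your bookkeeping worries are absorbed by the external theorem. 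Two points you should tighten. First, your induction hypothesis applies to edge sets sharing a single head (each $E_i$ has $y_i = He(E_i)$), but $E_{zy}$ in general contains edges with several distinct heads; ``$E_{zy}$ is identified'' should be unpacked as ``every edge of $E_{zy}$ lies in some g-HT identifiable head-sharing set,'' and the hypothesis applied to each of those sets before concatenating. Second, the topological sort is unnecessary machinery: since each sub-sequence already satisfies condition (3) internally, concatenating the sub-sequences contiguously (prerequisites first, then $(Z_E, E)$ last) preserves (3), and duplicates can simply be dropped in favor of first occurrences; appealing to acyclicity of a global dependency relation, while valid, hides the fact that plain concatenation suffices.
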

\begin{proof}
The lemma follows from Theorem 1 in \cite{chen:15}. 
\end{proof}

To see how the g-HTC can be used to identify coefficients, consider again Figure \ref{fig:IS-ECa}.  Initially, only $b$ is identifiable.  We are able to use $\{v_2\}$ or $\{v_1\}$ as a g-HT admissible set for $b$ since $E_{v_2 v_3}$ and $E_{v_1 v_3}$ are equal to $\emptyset$.  All other nodes are half-trek reachable from $v_2$ and their edges on the half-trek from $v_2$ are not identified.  Once $b$ is identified, we can use $\{v_3\}$ as a g-HT admissible set to identify $d$.  Similarly, once $d$ is identified, we can use $v_5$ as a g-HT admissible set to identify $a$ and $c$. 

The following lemma connects g-HT-admissibility with auxiliary instrumental sets.  

\begin{lemma}
\label{lem:gHTC=>IS+R}
If $Z$ is a g-HT-admissible set for a set of directed edges $E$ with head $y$, then $E$ is identified using instrumental sets in $G^{E_{Zy}+}$. 
\end{lemma}

Now, we are ready to show that aux-IS identifiability subsumes g-HT identifiability.  

\begin{theorem}
\label{thm:gHTC=>IS+R}
Given a semi-Markovian linear SEM with graph $G$, if a set of edges, $E$, with head $y$, is g-HTC identifiable, then it is aux-IS identifiable.
\end{theorem}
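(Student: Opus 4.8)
The plan is to reduce the theorem to the two sequence-based characterizations already in hand, Lemma \ref{lem:g-HT-ID} and Lemma \ref{lem:gHTC=>IS+R}, and then to glue their per-step conversions together into the iterative scheme of Definition \ref{def:IS+EC}. First I would invoke Lemma \ref{lem:g-HT-ID}: since $E$ is g-HTC identifiable, there exist sequences $(Z_1, \ldots, Z_k)$ and $(E_1, \ldots, E_k)$ such that each $Z_i$ satisfies the g-HTC with respect to $E_i$, with $E_{Z_1 y_1} = \emptyset$ and $E_{Z_i y_i} \subseteq E_1 \cup \cdots \cup E_{i-1}$, where $y_i = He(E_i)$. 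The goal is to show that this same collection of edge sets serves as the witness sequence required by Definition \ref{def:IS+EC}, and that the final identification of $E$ supplies condition (iii).

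The argument proceeds by induction on $i$, showing at each stage that $E_i$ is identified by a (possibly auxiliary) instrumental set in an augmentation of $G$ by previously identified edges. For the base case, $E_{Z_1 y_1} = \emptyset$ makes every node of $Z_1$ allowed, so $Z_1$ is g-HT admissible for $E_1$, and Lemma \ref{lem:gHTC=>IS+R} gives that $E_1$ is identified using instrumental sets in $G^{\emptyset +} = G$ --- exactly condition (i). For the inductive step, assume $E_1, \ldots, E_{i-1}$ have been identified. Since $E_{Z_i y_i} \subseteq E_1 \cup \cdots \cup E_{i-1}$, every edge in $E_{Z_i y_i}$ has a known coefficient, each node of $Z_i$ is therefore allowed, and $Z_i$ is g-HT admissible for $E_i$; Lemma \ref{lem:gHTC=>IS+R} then yields that $E_i$ is identified using instrumental sets in $G^{E_{Z_i y_i}+}$. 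By Definition \ref{def:auxIS}, an instrumental set in the augmented graph $G^{E_{Z_i y_i}+}$ is precisely an auxiliary instrumental set for $E_i$ with known edges $E_{Z_i y_i}$, and taking $E' = E_{Z_i y_i} \subseteq E_1 \cup \cdots \cup E_{i-1}$ matches condition (ii) verbatim. The same conversion, applied to the g-HT admissible set and augmenting edge set $E_R$ used to identify $E$ itself (or to the last step when $E = E_k$), produces condition (iii). Concatenating these conclusions gives the sequence of edge sets demanded by Definition \ref{def:IS+EC}, so $E$ is aux-IS identifiable.

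I expect the main obstacle to be one of careful alignment rather than deep mathematics, since Lemma \ref{lem:gHTC=>IS+R} has already discharged the substantive work of turning a single g-HT admissible step into an instrumental set in the augmented graph. The delicate points are: verifying that the ``allowed node'' requirement of g-HT admissibility coincides step-for-step with the provenance condition $E' \subseteq E_1 \cup \cdots \cup E_{i-1}$ of aux-IS identifiability; confirming that each augmentation $G^{E_{Z_i y_i}+}$ is legitimate because all of its augmenting edges were identified at strictly earlier stages, so their coefficients are genuinely known and the auxiliary variables are well defined; and checking that iterated augmentation introduces no new paths that could spoil the instrumental-set conditions. The last concern is handled by the path-preservation facts already used in the proof of Theorem \ref{thm:aux-IS}, namely Lemma \ref{lem:sep} together with the appendix lemma guaranteeing that adding auxiliary nodes creates no new paths among the existing variables.
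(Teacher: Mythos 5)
Your proposal is correct and follows essentially the same route as the paper's own proof: both obtain the sequences $(Z_1,\ldots,Z_k)$, $(E_1,\ldots,E_k)$ from Lemma \ref{lem:g-HT-ID} and then apply Lemma \ref{lem:gHTC=>IS+R} step by step to produce exactly the witness sequence required by Definition \ref{def:IS+EC}. The only difference is presentational: you make explicit the induction and the check that prior identification of $E_1\cup\cdots\cup E_{i-1}$ renders each $Z_i$ g-HT admissible, which the paper leaves implicit.
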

\begin{proof}
Since $E$ is g-HTC identifiable, there exists sequences of sets of nodes, $(Z_1 , ... Z_k)$, and sets of edges, $(E_1 , ... , E_k)$, such that 
\begin{enumerate}
\item $Z_i$ satisfies the g-HTC with respect to $E_i$ for all $i\in \{1, ... , k\}$,
\item $E_{Z_1 y_1}=\emptyset$, where $y_i = He(E_i)$ for all $i\in \{1, ... , k\}$, and
\item $E_{Z_i y_i}\subseteq (E_1 \cup ... \cup E_{i-1})$ for all $i\in \{2, ... , k\}$.
\end{enumerate}

Now, using Lemma \ref{lem:gHTC=>IS+R}, we see that there $Z_1$ is an instrumental set for $E_1$ in $G^{E_{Z_1 y_1}+} = G$ and $E_i$ is identified using instrumental sets and Lemma 3 in $G^{E_{Z_i y_i}+}$ with $E_{Z_i y_i}\subseteq (E_1 \cup ... \cup E_{i-1})$ for all $i\in \{2, ... , k\}$.  As a result, $E$ is Aux-IS identifiable.
\end{proof}

\section{Further Applications}
\label{sec:discuss}

We have formalized auxiliary variables and demonstrated their ability to increase the identification power of instrumental sets.  In this section, we discuss additional applications of auxiliary variables as alluded to in the introduction, namely, incorporating external knowledge of coefficients values and deriving new constraints over the covariance matrix.

When the causal effect of $x$ on $y$ is not identifiable and performing randomized experiments on $x$ is not possible (due to cost, ethical, or other considerations), we may nevertheless be able to identify the causal effect of $x$ on $y$ using knowledge gained from experiments on another set of variables $Z$.  The task of determining whether causal effects can be computed using surrogate experiments generalizes the problem of identification and was 
named $z$-identification in \cite{bareinboim:pea12}. They provided necessary and sufficient conditions for this task in the non-parametric setting. Considering Figure \ref{fig:zIDa}, one can immediately see that the effect of $x$ on $y$ is not identifiable, given the unblockable back-door path. Additionally, using BP's z-identification condition, one can see that the effect of $x$ on $y$ is not identifiable, even with experiments over $z$. Remarkably, if one is willing to assume that the system is linear, more can be said.  The experiment over $z$ would yield $\gamma$, allowing us to create an auxiliary variable, $y^*$, which is represented by Figure \ref{fig:zIDb}. Now, $\alpha$ can be easily identified using auxiliary variables.  To witness, note that $\sigma(z,y^*)=C_{xz}\alpha +\gamma - \gamma$ and $\sigma(z,x)=C_{xz}$ so that $\alpha = \frac{\sigma(z,y^*)}{\sigma(z,x)}$.  


While $z$ is not technically an instrument for $\alpha$ in $G^{\gamma+}$, it behaves like one.  When $z$ allows the identification of $\alpha$ by using an auxiliary variable $y^*$, we will call $z$ a \emph{quasi-instrument}.  The question naturally arises whether we can improve aux-IS identifiability (Def. \ref{def:IS+EC}) by using quasi-instruments.  However, aux-IS identifiability requires that we learn the value of $\gamma$ from the model, not externally.  In order to identify $\gamma$ from the model, we would require an instrument.  If such an instrument, $w$, existed, as in Figure \ref{fig:zIDc}, then both $\gamma$ and $\alpha$ could have been identified together using $\{z,w\}$ as an instrumental set.  As a result, quasi-instruments are not necessary.  However, if $\gamma$ could only be evaluated externally, then quasi-instruments are necessary to identify $\alpha$.  
%
%

In some cases, the cancellation of paths due to auxiliary variables may generate new vanishing correlation constraints.  For example, in Figure \ref{fig:IS+Rb}, we have that $\sigma(z^*, s) = \beta \gamma -\beta \gamma = 0$.  Thus, we see that auxiliary variables allows us to identify additional testable implications of the model.  Moreover, if certain coefficients are evaluated externally, that information can also be used to generate testable implications.  Lemma \ref{lem:sep} can be used to identify independences involving auxiliary variables from the graph, $G$.

Besides z-identification and model testing, these new constraints can also be used to prune the space of compatible models in the task of structural learning. Additionally, it is natural to envision that auxiliary variables can be useful to answer evaluation questions in different, but somewhat related domains, such as in the transportability problem \cite{pearl:bar11-r372a}, or more broadly, the data-fusion problem \cite{bareinboim:pea15-r450}, where datasets collected under heterogenous conditions need to be combined to answer a query in a target domain.

\section{Conclusion}
\label{sec:conclusion}

In this paper, we tackle the fundamental problem of identification in linear system as articulated by \cite{fisher:66}. We move towards a general solution of the problem, enriching graph-based identification and model testing methods by introducing auxiliary variables.  Auxiliary variables allows existing identification and model testing methods to incorporate knowledge of non-zero parameter values.  We proved independence properties of auxiliary variables and demonstrated that by iteratively identifying parameters using auxiliary instrumental sets, we are able to greatly increase the power of instrumental sets, to the extent that it subsumes the most general criterion for identification of linear SEMs known to date.  We further discussed how auxiliary variables can be useful for the general tasks of testing and $z$-identification.  

\section{Acknowledgments}

This research was supported in parts by grants from NSF \#IIS-1302448 and \#IIS-1527490 and ONR \#N00014-13-1-0153 and \#N00014-13-1-0153.

\section{Appendix}

\setcounter{lemma}{0}
\setcounter{theorem}{0}

\begin{lemma}
\label{lem:aug}
Given a linear structural model, $M$, with induced graph $G$, and a set of directed edges $E$ with known coefficient values, $\sigma (w,v) = W_{G^{E+}}(w,v)$, where $w,v \in V^*$ and $w\neq v$ .
\end{lemma}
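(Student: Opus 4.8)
The plan is to exploit the fact that the augmented model $M^{E+}$ is itself an ordinary linear SEM in which each auxiliary variable is a \emph{deterministic} linear combination of original variables. Indeed, substituting the structural equation of $y_i$ into the definition $y_i^* = y_i - \Lambda_{X_i y_i} X_i^t$ shows that every $y_i^* \in He^*(E)$ is a fixed linear functional of the variables in $V$ (its own error term is the residual already present in $y_i$, and it introduces no new noise). Consequently, by bilinearity of covariance, $\sigma(w,v)$ for any $w,v \in V^*$ is a linear combination of covariances among original variables, and for those we may invoke ordinary Wright's rules: $\sigma(a,b) = W_G(a,b)$ for distinct $a,b \in V$, together with $\sigma(a,a)=1$ by the standardization convention. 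The strategy is therefore to compute $\sigma(w,v)$ by this bilinear expansion and then show that $W_{G^{E+}}(w,v)$ equals the same quantity by grouping the path products according to the edges incident to the auxiliary endpoints.

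The structural observation driving the argument is that every auxiliary node $y_i^*$ is a \emph{sink}: it depends only on nodes in $V$ and has out-degree $0$. Hence on any path $y_i^*$ can appear as an interior node only as a collider, and being childless it has no descendants, so with empty conditioning set it is never an ancestor of the conditioning set. Any path routed through an auxiliary node as an interior vertex is therefore blocked. This yields two facts I would record up front. First, the unblocked paths between two original nodes are identical in $G$ and in $G^{E+}$, so $W_{G^{E+}}(a,b) = W_G(a,b) = \sigma(a,b)$ for $a,b \in V$ (covariances among original variables are also unchanged, since adding sinks cannot feed back into $M$), disposing of the case $w,v \in V$. Second, an unblocked path with an auxiliary endpoint $y_i^*$ must leave $y_i^*$ through exactly one of its incoming edges, either $y_i^* \leftarrow y_i$ with coefficient $1$ or $y_i^* \leftarrow x$ with coefficient $-\Lambda_{x y_i}$ for some $x \in X_i$.

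For the mixed case $w = y_i^*,\ v \in V$, I would group the unblocked $y_i^*$-to-$v$ paths in $G^{E+}$ by their first edge. The paths beginning $y_i^* \leftarrow y_i$ are in bijection with the unblocked $y_i$-to-$v$ paths of $G$ (prefixing one leaf edge at an endpoint never creates an interior collider, so the blocking status of the continuation is unchanged, and no continuation threads through another sink), contributing $1 \cdot W_G(y_i,v) = \sigma(y_i,v)$; each group beginning $y_i^* \leftarrow x$ contributes $-\Lambda_{x y_i}\, W_G(x,v) = -\Lambda_{x y_i}\,\sigma(x,v)$. Summing reproduces exactly the bilinear expansion $\sigma(y_i^*,v) = \sigma(y_i,v) - \sum_{x \in X_i} \Lambda_{x y_i}\,\sigma(x,v)$. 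When $v$ coincides with $y_i$ or with some $x \in X_i$, the relevant continuation collapses to the trivial (length-zero) path of product $1$, which matches precisely the variance-$1$ terms produced by standardization, so no special handling is needed. The case $w = y_i^*,\ v = y_j^*$ is the identical argument applied simultaneously at both endpoints, a double decomposition matching the double bilinear expansion; the cancellations the auxiliary variables are designed to produce (e.g. $y_i^* \leftarrow y_i \leftarrow x \rightsquigarrow \cdots$ against $y_i^* \leftarrow x \rightsquigarrow \cdots$) appear identically on both sides and require no separate treatment.

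The main obstacle is the path correspondence used in the second structural fact: I must verify carefully that appending a single auxiliary edge at an endpoint neither changes which continuations are unblocked nor permits an unblocked path to pass through a second auxiliary node internally, and that the endpoint-coincidence cases line the standardization ``variance $=1$'' terms up with trivial-continuation paths. Both points follow cleanly from the sink/collider property of the auxiliary nodes, so once that property is established the remaining matching is a routine bookkeeping of signs and coefficients.
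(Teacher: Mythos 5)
Your proposal is correct and follows essentially the same route as the paper's proof: a case analysis on whether the endpoints are auxiliary, the bilinear expansion $\sigma(y_i^*,v)=\sigma(y_i,v)-\sum_{x\in X_i}\Lambda_{x y_i}\sigma(x,v)$, and matching it against the decomposition of $W_{G^{E+}}(y_i^*,v)$ by the incoming edges at the auxiliary endpoint (with the two-auxiliary case handled by doing this at both ends). The only difference is that you justify explicitly what the paper treats as clear --- the sink/collider argument showing that auxiliary nodes create no new unblocked paths between original variables, and the length-zero-path treatment of the endpoint-coincidence and standardization terms --- which strengthens rather than changes the argument.
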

\begin{proof}
First, we consider the case where neither $w$ nor $v$ are in $He^* = \{y^*_1 , ... , y^*_k\}$.  In this case, it should be clear that $W_G (w,v) = W_{G^{E+}}(w,v)$.  Now, since $w$ and $v$ are the same variables in $M$ as they are in $M^{E+}$, we have that Wright's rules hold for $\sigma(x,y)$ in $M^{E+}$ using $G^{E+}$.  

Next, we consider that case when one of $w$ or $v$ equals $y^*_i$ for some $i\in \{1, ..., k\}$.  Without loss of generality, let us assume that $w=y^*_i$.  First note that $y^*_i = y_i - \Lambda_{X_i y_i} T^t_i$ so 
\begin{align*}
\sigma(y^*_i, v) &= \sigma(y_i-\lambda^1_i x^1_i -... - \lambda^l_i x^l_i , v) \\ 
&= 1*\sigma(y_i, v)- \lambda^1_i \sigma(x^1_i, v)- ... - \lambda^l_i \sigma(x^l_i, v),
\end{align*} 
where $\Lambda_{X_i y_i} = (\lambda^1_i , ... , \lambda^l_i )^T$ and $X_i = (x^1_i , ... , x^l_i)^T$. Additionally, note that the only edges which are connected to $y^*_i$ are the directed edges, $(x^1_i, y^*_i), ... , (x^l_i , y^*_i)$ with coefficients $\lambda^1_i , ... , \lambda^l_i $, and $(y_i, y^*_i)$ with the coefficient 1.  As a result, 
\begin{align*}
W_{G^{E+}} (y^*_i, v) =& 1 * W_{G^{E+}} (y_i, v) - \lambda^1_i W_{G^{E+}}(x^1_i,v) - \\ & ... - \lambda^l_i W_{G^{E+}}(x^l_i,v) \\ 
=& 1*\sigma(y_i, v)- \lambda^1_i \sigma(x^1_i, v)- ... \\ &- \lambda^l_i \sigma(x^l_i, v) \\=& \sigma(y^*_i, v).
\end{align*}

Finally, we consider the where $w=y^*_i$ and $v=y^*_j$ for some $i,j \in \{1,...,k\}$.  Here, \begin{align*}\sigma(y^*_i, y^*_j)=&\sigma(y_i-\lambda^1_i x^1_i -... - \lambda^l_i x^l_i , \\
&\;\;\:\: y_j-\lambda^1_j x^1_j -... - \lambda^m_j x^m_j)\\
=&(1*\sigma(y_i , y_j)*1- 1 *\sigma(y_i, x^1_j) *\lambda^1_j \\
 &- ...-1*\sigma(y_i , x^m_j)*\lambda^m_j)+\\
 &((-\lambda^1_i *\sigma(x^1_i , y_j)*1)+\lambda^1_i*\sigma(x^1_i , x^1_j)*\lambda^1_j\\
 &+...+\lambda^1_i*\sigma(x^1_i , x^m_j)*\lambda^m_j)+\\
& \vdots \\
&+(-\lambda^l_i * \sigma(x^l_i , y_j)*1)+ \lambda^l_i *\sigma(x^l_i , x^1_j) \lambda^1_j \\
&+... +\lambda^l_i *\sigma(x^l_i , x^m_j)*\lambda^m_j)).
\end{align*}
Similarly, 
\begin{align*}
W_{G^{E+}}(y^*_i , y^*_j) =&(1*W_{G^{E+}}(y_i , y_j)*1 - 
\\&1* W_{G^{E+}}(y_i, x^1_j) *\lambda^1_j - ... -\\
&1*W_{G^{E+}}(y_i , x^m_j)*\lambda^m_j) + \\
&((-\lambda^1_i *W_{G^{E+}}(x^1_i , y_j)*1)+\\
&\lambda^1_i*W_{G^{E+}}(x^1_i , x^1_j)*\lambda^1_j+...+\\
&\lambda^1_i*W_{G^{E+}}(x^1_i , x^m_j)*\lambda^m_j)\\
&\vdots \\
&+(-\lambda^l_i * W_{G^{E+}}(x^l_i , y_j)*1)+ \\
&\lambda^l_i *W_{G^{E+}}(x^l_i , x^1_j) \lambda^1_j +... \\
&+\lambda^l_i *W_{G^{E+}}(x^l_i , x^m_j)*\lambda^m_j))\\
&=\sigma(y^*_i , y^*_j). 
\end{align*}
\end{proof}

\begin{lemma}
\label{lem:sep}
Given a semi-Markovian linear SEM with graph $G$, $(z^*\indep y)_{G^{E_{z}+}}$ if and only if $z$ is d-separated from $y$ in $G_{E_z}$, where $E_z\subseteq Inc(z)$ and $G_{E_z}$ is the graph obtained when $E_z$ is removed from $G$. 
\end{lemma}
\begin{proof}
First, we show the sufficiency of Lemma \ref{lem:sep}.  Suppose that $(z^*\notindep y)_{G^{E_{z}+}}$.  Then $W_{G^{E_{z}+}}(x,y)\neq 0$.  First, since $(z\indep y)_{G_{E_z}}$, there are no unblocked paths between $z$ and $y$ that do not include the edge $z\leftarrow w$ for some $w\in Pa(z)$.  As a result, there are no blocked paths between $z^*$ and $y$ that do not either include the edge $z^* \leftarrow z \leftarrow w$ or $z^* \leftarrow w$.  Now, Lemma \ref{lem:cancel} tells us that for each path, $\pi_{e}$, beginning $z^* \leftarrow z \leftarrow w$, there is a corresponding path beginning $z^*\leftarrow w$ for which $P(\pi_e)+P(\pi_{-e})=0$.  As a result, for $W_{G^{E_{z}+}_E}(x,y)\neq 0$ there must be an unblocked path, $\pi$ from $z^*$ to $y$ beginning $z^* \leftarrow w ... y$.  Lemma \ref{lem:excess} tells us that such paths must include $z$ as an internal node.  Since $(z\indep y)_{G}$, and there are no unblocked paths between $z$ and $y$ that do not include the edge $z\leftarrow w$, $\pi$ must be of the form $z^* \leftarrow w ... z \leftarrow w ...y$.  However, this path visits $w$ twice and is therefore not an unblocked path.  As a result, $W_{G^{E_{z}+}_E}(x,y)=0$, and we have a contradiction.

Now, we show the necessity of Lemma \ref{lem:sep}.  If $z$ is not d-separated from $y$ in $G_{E_z}$, then there exists some path, $\pi$, between $z$ and $y$ that does not include an edge in $E_z$.  As a result, there is a path in $G^{E_z+}$, from $z^*$ to $y$, which is the concatenation of $z^* \leftarrow z$ with $\pi$.  Clearly, this path is not cancelled out in the way paths that go through $E_z$ are.  As a result, we have that $\sigma(z^*,y)\neq 0$.  
\end{proof}

\begin{theorem}
\label{thm:aux-IS}
Let $E_Z$ be a set of directed edges whose coefficient values are known.  A set of directed edges, $E=\{(x_1, y) , ..., (x_k, y)\}$, in a graph, $G$, is identified if there exists $Z$ such that:
\begin{enumerate}
\item $|Z|=k$,
\item for all $z_i \in Z$, $(z_i \indep y)_{G_{E\cup E_{z_i}}}$, where $E_{z_i}=E_Z\cap Inc(z_i)$ and $G_{E\cup E_{z_i}}$ is the graph obtained by removing the edges in $E\cup E_{z_i}$ from $G$, and
\item there exists unblocked paths $\Pi = \{\pi_1 , \pi_2 ,... , \pi_k\}$ such that $\pi_i$ is a path from $z_i$ to $x_i$ and $\Pi$ has no sided intersection.
\end{enumerate}
\end{theorem}
If the above conditions are satisfied then $Z$ is an auxiliary instrumental set for $E$.  
\begin{proof}
First, note that if $E_{Z} = \emptyset$, then $Z$ is an instrumental set in $G$ and we are done.  We now consider the case when $E_{Z} \neq \emptyset$.  Since $|Z^*|=|Z|-|A|+|A*| = |Z|-|A|+|A|=|Z|$, $|Z^*|=|E|$, IS-(i) is satisfied.  Now, we show that IS-(iii) is satisfied.  For each $z_i \in Z$, let $\pi_{z_i} \in \Pi$ be the path in $\Pi$ from $z_i$ to $Ta(E)$.  Now, for each $a^*_i\in A^*$, let $\pi_{a^*_i}$ be the concatenation of path $a^* \leftarrow a$ with $\pi_{a_i}$.  It should be clear that $\Pi\setminus \{\pi_{a_i}\} \cup \{\pi_{a^*_i}\}$ satisfies IS-(iii) in $G^{E_{Zy}+}$.  Lastly, we need to show that IS-(ii) is also satisfied.   

First, if $z_i \in Z\setminus A$, then $(z_i \indep y)_{G_E}$.  It follows follows that $(z_i \indep y)_{G^{E_Z+}_E}$ since Lemma \ref{lem:nointernalZ*} implies that no new paths from $z_i$ to $y$ can be generated by adding the auxiliary nodes.  Now, $(a_i\indep y)_{G^{E_Z+}_E}$ for all $a_i \in A$ follows from (ii) and Lemma \ref{lem:sep}, proving the theorem.
\end{proof}

\stepcounter{lemma}

\begin{lemma}
\label{lem:gHTC=>IS+R}
If $Z$ is a g-HT-admissible set for a set of directed edges $E$ with head $y$, then $E$ is identified using instrumental sets in $G^{E_{Zy}+}$. 
\end{lemma}
\begin{proof}
First, note that if $E_{Zy} = \emptyset$, then $Z$ is an instrumental set in $G$ and we are done.  We now consider the case when $E_{Zy} \neq \emptyset$.  

Let $A = He(E_{Zy})$.  We will show that $Z^* = (Z\setminus A)\cup A^*$ is an instrumental set in $G^{E_{Zy}+}$.  From HT-(i), we have that $|Z|=|E|$.  Since $|Z^*|=|Z|-|A|+|A*| = |Z|-|A|+|A|=|Z|$, $|Z^*|=|E|$ and IS-(i) is satisfied.   From HT-(iii), we have that $\Pi$ is a set of paths from $Z$ to $Ta(E)$ with no sided intersection.  For each $z_i \in Z$, let $\pi_{z_i} \in \Pi$ be the path in $\Pi$ from $z_i$ to $Ta(E)$.  Now, for each $a^*_i\in A^*$, let $\pi_{a^*_i}$ be the concatenation of path $a^* \leftarrow a$ with $\pi_{a_i}$.  It should be clear that $\Pi\setminus \{\pi_{a_i}\} \cup \{\pi_{a^*_i}\}$ satisfies IS-(iii) in $G^{E_{Zy}+}$.  We need to show that IS-(ii) is also satisfied.   

Consider any $z\in Z$ such that $(z\notindep y)_{G_E}$.  In order for $Z$ to be a g-HT-admissible set, any path, $\pi_z$ connecting $z$ with $y$ in $G_E$ must include an edge, $e_z$ in $Inc(z)$.  Moreover, $e_z \in E_{Zy}$.  As a result, $(z\indep y)_{G_{E\cup E_{Zy}}}$.  It follows from Lemma \ref{lem:sep} that $(z^* \indep y)_{G^{E_{Zy}+}_E}$.  As a result, $Z^*$ is an instrumental set for $E$ in $G^{E_{Zy}+}$.  
\end{proof}

\begin{theorem}
\label{thm:gHTC=>IS+R}
Given a semi-Markovian linear SEM with graph $G$, if a set of edges, $E$, with head $y$, is g-HTC identifiable, then it is IS+EC identifiable.
\end{theorem}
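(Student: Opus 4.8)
The plan is to prove the theorem by combining the structural decomposition supplied by Lemma \ref{lem:g-HT-ID} with the bridging Lemma \ref{lem:gHTC=>IS+R}, and then checking that the resulting sequence of edge sets meets Definition \ref{def:IS+EC} essentially verbatim. The guiding observation is that g-HT identifiability and aux-IS identifiability are characterized by structurally parallel sequences: Definition \ref{def:IS+EC} asks for a sequence of edge sets, each identified by (auxiliary) instrumental sets in a graph augmented by previously identified coefficients, whereas Lemma \ref{lem:g-HT-ID} delivers precisely such a sequence phrased in terms of the g-HTC. Lemma \ref{lem:gHTC=>IS+R} is the device that converts one characterization into the other, step by step.

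Concretely, I would first invoke Lemma \ref{lem:g-HT-ID} to obtain sequences $(Z_1, \ldots, Z_k)$ and $(E_1, \ldots, E_k)$, with $E = E_k$, such that each $Z_i$ satisfies the g-HTC with respect to $E_i$, with $E_{Z_1 y_1} = \emptyset$ and $E_{Z_i y_i} \subseteq E_1 \cup \ldots \cup E_{i-1}$ for $i \geq 2$. I would then dispatch the base case: since $E_{Z_1 y_1} = \emptyset$, Lemma \ref{lem:gHTC=>IS+R} yields that $E_1$ is identified using instrumental sets in $G^{\emptyset +} = G$, which is exactly condition (i) of Definition \ref{def:IS+EC}.

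For the inductive step $(i \geq 2)$, I would apply Lemma \ref{lem:gHTC=>IS+R} to the pair $Z_i, E_i$ to conclude that $E_i$ is identified using instrumental sets in $G^{E_{Z_i y_i}+}$. The crucial point is the containment $E_{Z_i y_i} \subseteq E_1 \cup \ldots \cup E_{i-1}$: by the induction hypothesis every coefficient needed to form the auxiliary variables of $G^{E_{Z_i y_i}+}$ has already been identified at an earlier stage, so this is a legitimate auxiliary instrumental set identification in $G^{E'+}$ with $E' = E_{Z_i y_i} \subseteq E_1 \cup \ldots \cup E_{i-1}$, matching condition (ii). Specializing to $i = k$ and recalling $E = E_k$ then supplies condition (iii), completing the match with Definition \ref{def:IS+EC} and hence establishing aux-IS identifiability of $E$.

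The step I expect to require the most care is the alignment between what Lemma \ref{lem:gHTC=>IS+R} literally produces, namely ``$E_i$ identified using instrumental sets in $G^{E_{Z_i y_i}+}$,'' and what Definition \ref{def:IS+EC} literally demands, namely ``$E_i$ identified using auxiliary instrumental sets in $G^{E'+}$.'' These coincide through Definition \ref{def:auxIS}, but one must confirm that the known-coefficient edge set $E_{Z_i y_i}$ used to build the augmentation is actually available at stage $i$; this is exactly guaranteed by the containment together with the induction. Modulo this bookkeeping, no new graphical argument is needed, since Lemmas \ref{lem:g-HT-ID} and \ref{lem:gHTC=>IS+R} already absorb all of the combinatorial content.
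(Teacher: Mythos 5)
Your proposal is correct and follows essentially the same route as the paper's proof: both invoke the characterization in Lemma \ref{lem:g-HT-ID} to obtain the sequences $(Z_1,\ldots,Z_k)$ and $(E_1,\ldots,E_k)$, then apply Lemma \ref{lem:gHTC=>IS+R} at each stage (with $G^{E_{Z_1 y_1}+}=G$ as the base case) and match the resulting sequence against Definition \ref{def:IS+EC}. Your write-up is simply more explicit about the induction and the bookkeeping that $E_{Z_i y_i}\subseteq E_1\cup\ldots\cup E_{i-1}$ makes each augmentation legitimate, which the paper leaves implicit.
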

\begin{proof}
Since $E$ is g-HTC identifiable, there exists sequences of sets of nodes, $(Z_1 , ... Z_k)$, and sets of edges, $(E_1 , ... , E_k)$, such that 
\begin{enumerate}
\item $Z_i$ satisfies the g-HTC with respect to $E_i$ for all $i\in \{1, ... , k\}$,
\item $E_{Z_1 y_1}=\emptyset$, where $y_i = He(E_i)$ for all $i\in \{1, ... , k\}$, and
\item $E_{Z_i y_i}\subseteq (E_1 \cup ... \cup E_{i-1})$ for all $i\in \{2, ... , k\}$.
\end{enumerate}

Now, using Lemma \ref{lem:gHTC=>IS+R}, we see that there $Z_1$ is an instrumental set for $E_1$ in $G^{E_{Z_1 y_1}+} = G$ and $E_i$ is identified using instrumental sets and Lemma 3 in $G^{E_{Z_i y_i}+}$ with $E_{Z_i y_i}\subseteq (E_1 \cup ... \cup E_{i-1})$ for all $i\in \{2, ... , k\}$.  As a result, $E$ is Aux-IS identifiable.
\end{proof}

\stepcounter{lemma}

The following lemmas characterize the cancellation of back-door paths that occurs as a result of adding auxiliary variables, ensure that no new problematic paths are created, and are used to prove Lemma \ref{lem:sep}.

\begin{lemma}
\label{lem:cancel}
Let $E$ be a set of directed edges whose coefficient values are known.  For any $e\in E$, let $z_e = He(e)$ and $w_e=Ta(e)$. If $\pi_e$ is an unblocked path in $G^{E+}$ that is the concatenation of $z^* \leftarrow z\leftarrow w$ and some path $\pi_w$ beginning from $w$, then there exists $\pi_{-e}$ such that $\pi_{-e}$ is the concatenation of $z^* \leftarrow w$ and $\pi_w$ and $P(\pi_e)+P(\pi_{-e})=0$.
\end{lemma}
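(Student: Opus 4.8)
The plan is to work directly from the structural equation of the auxiliary node together with Wright's rules, reducing the claim to a one-line cancellation of coefficient products once the edges incident to $z^*$ have been correctly identified. Fix $e \in E$ with $z = z_e = He(e)$ and $w = w_e = Ta(e)$, and let $\beta = \Lambda_{wz}$ be the known coefficient of $e$. By Definition \ref{def:aug} the structural equation for the auxiliary node is $z^* = z - \Lambda_{X z} X^t$ with $X = Ta(E)\cap Pa(z)$; since $e \in E$ we have $w \in X$, so in $G^{E+}$ the node $z^*$ receives the directed edge $z \to z^*$ with coefficient $1$ and, among the edges from the other parents in $X$, the directed edge $w \to z^*$ with coefficient $-\beta$. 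These are the only two edges the argument needs.

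First I would define $\pi_{-e}$ explicitly as the concatenation of the single edge $z^* \leftarrow w$ (coefficient $-\beta$) with the given tail $\pi_w$ starting at $w$, and argue that it is a legitimate unblocked path in $G^{E+}$. It is simple because $\pi_e$ is: the vertex set of $\pi_{-e}$ is that of $\pi_e$ with $z$ deleted, and $z^*$, $w$, and the interior of $\pi_w$ are already pairwise distinct along $\pi_e$. For unblockedness (here the conditioning set is empty, so unblocked means collider-free), every interior vertex of $\pi_w$ keeps both of its incident edges unchanged, hence its collider status is unchanged; and at $w$ the edge toward the $z^*/z$ side points out of $w$ in both paths ($w \to z$ in $\pi_e$, $w \to z^*$ in $\pi_{-e}$), so $w$ is a non-collider in both. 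Consequently, since $\pi_e$ is collider-free, so is $\pi_{-e}$, and it is an unblocked path as required.

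Finally I would read off the two products: along $\pi_e$ the prefix $z^* \leftarrow z \leftarrow w$ contributes $1\cdot\beta$, giving $P(\pi_e) = \beta\,P(\pi_w)$, while along $\pi_{-e}$ the prefix $z^* \leftarrow w$ contributes $-\beta$, giving $P(\pi_{-e}) = -\beta\,P(\pi_w)$, so that $P(\pi_e) + P(\pi_{-e}) = 0$. The only point requiring care, and the only place the argument could go wrong, is the bookkeeping on the edges incident to $z^*$: I must ensure that the coefficient of $w \to z^*$ is exactly the negative of the coefficient of the specific original edge $e$ traversed inside $\pi_e$, rather than some aggregate over several parents. This is guaranteed because $\pi_e$ enters $z$ via the particular edge $e$ and, by Definition \ref{def:aug}, each parent of $z$ in $X$ contributes its own separate directed edge into $z^*$ with its own coefficient. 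The verification that $\pi_{-e}$ remains unblocked is then routine given the collider analysis above.
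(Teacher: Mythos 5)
Your proof is correct and follows essentially the same route as the paper's: identify the two edges $z \rightarrow z^*$ (coefficient $1$) and $w \rightarrow z^*$ (coefficient $-c_e$) created by the construction in Definition \ref{def:aug}, form $\pi_{-e}$ by replacing the prefix $z^* \leftarrow z \leftarrow w$ with $z^* \leftarrow w$, and cancel the two coefficient products. Your extra checks---that $\pi_{-e}$ is itself a simple unblocked path and that the edge $w \rightarrow z^*$ carries the negative of the coefficient of the \emph{specific} edge $e$ rather than some aggregate---are details the paper leaves implicit, but they do not change the argument.
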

\begin{proof}
For each $e\in E$ with $c_e$ the coefficient value for $e$, $z_e = He(e)$, and $w_e = Ta(e)$, the graph $G^{E+}$ adds a node $z^*$, a directed edge from $z$ to $z^*$ with coefficient value 1, and a directed edge from $w$ to $z^*$ with coefficient value $-c_e$.  As a result, we can construct $\pi_{-e}$ by concatenating $z_e^* \leftarrow w$ with $\pi_w$.  $P(\pi_e)+P(\pi_{-e})=0$ follows from the fact that the edge $(z, z^*)$ has a coefficient value of 1 and $(w,z^*)$ has a coefficient value of $-c_e$.
\end{proof}

The following lemma characterizes the ``new'' back-door paths that are generated as a result of adding an auxiliary variable.  For example, in Figure 1b, we have a ``new'' back-door path between $z^*$ and $y$, $z^*\leftarrow w \leftarrow z \rightarrow x \rightarrow y$.  We will see in Theorem \ref{thm:aux-IS} that such paths are not problematic for identification using instrumental sets in the augmented model since they must include $z$.  

\begin{lemma}
\label{lem:excess}
Let $E$ be a set of directed edges whose coefficient values are known.  For any $e\in E$, let $z_e = He(e)$, $w_e=Ta(e)$, and $\pi_{-e}$ be a path in $G^{E+}$ that is the concatenation of $z^* \leftarrow w$ and some path $\pi_w$ beginning from $w$.  If there does not exist an unblocked path, $\pi_e = z^* \leftarrow z \leftarrow w + \pi_w$, so that $P(\pi_e)+P(\pi_{-e})=0$, then $\pi_{-e}$ is of the form $z^*\leftarrow w ... z ...$
\end{lemma}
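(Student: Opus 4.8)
The plan is to prove the contrapositive: assuming $z$ does \emph{not} occur among the internal nodes of the (unblocked) path $\pi_{-e}$, I will exhibit an unblocked path $\pi_e = z^* \leftarrow z \leftarrow w + \pi_w$ with $P(\pi_e)+P(\pi_{-e})=0$, which contradicts the hypothesis of the lemma. First I would recall from the construction of $G^{E+}$ (Definition \ref{def:aug}, as used in Lemma \ref{lem:cancel}) that the auxiliary node $z^*$ carries exactly two incoming edges, $z\rightarrow z^*$ with coefficient $1$ and $w\rightarrow z^*$ with coefficient $-c_e$, while the edge $e=w\rightarrow z$ itself is present with coefficient $c_e$. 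Hence all three edges $z\rightarrow z^*$, $w\rightarrow z$, and $w\rightarrow z^*$ exist, so that both the prefix $z^*\leftarrow z\leftarrow w$ and the prefix $z^*\leftarrow w$ are realizable as walks that can be continued by the common suffix $\pi_w$.

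Next I would verify that the constructed $\pi_e$ is a legitimate (simple, unblocked) path. The only node occurring in $\pi_e$ but not in $\pi_{-e}$ is $z$, and since $z\neq z^*$ and $z\neq w$ (the two endpoints of $e$ are distinct and neither is the auxiliary node), the walk $\pi_e$ can fail to be simple only if $z$ coincides with some node of $\pi_w$. Under the contrapositive assumption $z\notin\pi_w$, so $\pi_e$ is simple. It then remains to check that replacing the prefix introduces no collider. At $z$ the incident edges are $w\rightarrow z$ and $z\rightarrow z^*$, forming the chain $w\rightarrow z\rightarrow z^*$, so $z$ is a non-collider. At $w$, the edge on the auxiliary side is $w\rightarrow z^*$ in $\pi_{-e}$ and $w\rightarrow z$ in $\pi_e$; both leave $w$, so the orientation of the edge incident to $w$ toward the start of the path is unchanged, and therefore the collider status of $w$ is identical in the two paths. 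Consequently $\pi_e$ inherits the no-collider (unblocked) property of $\pi_{-e}$.

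Finally, the product identity is immediate from the coefficient values: $\pi_e$ and $\pi_{-e}$ share the suffix $\pi_w$, while their prefixes contribute $1\cdot c_e$ and $-c_e$ respectively, so $P(\pi_e)+P(\pi_{-e}) = (c_e - c_e)\,P(\pi_w) = 0$, which is exactly the cancellation recorded in Lemma \ref{lem:cancel}. Producing such a $\pi_e$ contradicts the assumption that no cancelling unblocked $\pi_e$ exists, so the contrapositive is established and $\pi_{-e}$ must be of the form $z^*\leftarrow w\ldots z\ldots$. I expect the only delicate point to be the collider bookkeeping at $w$: one must confirm that swapping the single-edge prefix $z^*\leftarrow w$ for the two-edge prefix $z^*\leftarrow z\leftarrow w$ leaves the edge at $w$ pointing away from $w$ on the auxiliary side and alters no node of $\pi_w$, so that no new collider can arise and the unblocked-ness of $\pi_{-e}$ transfers to $\pi_e$. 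It is also worth stating explicitly at the outset that $\pi_{-e}$ is taken to be unblocked, since this is exactly the regime in which the lemma is invoked within the proof of Lemma \ref{lem:sep}.
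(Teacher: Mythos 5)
Your proof is correct and follows essentially the same approach as the paper's (much terser) argument: construct the candidate cancelling path $\pi_e = z^* \leftarrow z \leftarrow w + \pi_w$, note the coefficient cancellation from the construction of $G^{E+}$, and observe that the only obstruction to $\pi_e$ being a valid unblocked path is that it would visit $z$ twice, i.e., that $z$ lies on $\pi_w$. Your contrapositive write-up simply makes explicit what the paper leaves implicit — the simplicity check, the collider bookkeeping at $w$ and $z$, and the standing assumption that $\pi_{-e}$ is itself unblocked (needed for the statement to be true, and satisfied in its use within Lemma \ref{lem:sep}).
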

\begin{proof}
By construction, the only way $\pi_{-e}$ does not have a corresponding unblocked path $\pi_e$ is if $\pi_{-e}$ goes through $w$ so that $\pi_e$ would have the form $z^* \leftarrow z \leftarrow w ... z ...$, which is not an unblocked path because it visits $z$ twice.
\end{proof}

The next lemma ensures that no new paths between the nodes in $G$ are created in $G^{E+}$.  

\begin{lemma}
\label{lem:nointernalZ*}
Let $V$ be the set of variables in a semi-Markovian linear SEM, $M$, with graph $G$.  Let $E$ be a set of directed edges whose coefficient values are known and let $Z=He(E)$.  If $\pi$ is an unblocked path in $G^{E+}$ between $x$ and $y$, with $x,y\in V$,  then $\pi$ is an unblocked path between $x$ and $y$ in $G$.  
\end{lemma}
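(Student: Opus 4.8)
The plan is to exploit the single structural feature that distinguishes the auxiliary nodes from the original ones. By Definition \ref{def:aug}, every auxiliary variable $y^*_i$ is defined by the equation $y^*_i = y_i - \Lambda_{X_i y_i} X^t_i$ with no error term of its own, so in $G^{E+}$ all edges incident to $y^*_i$ are \emph{directed edges pointing into} $y^*_i$ (one from $y_i$ with coefficient $1$, and one from each member of $X_i \subseteq Pa(y_i)$), and there are no bidirected edges at $y^*_i$. In particular, each $y^*_i$ is a \emph{sink}: it has no outgoing edges and hence no descendants. The construction also adds no edges among the nodes of $V$; the only new edges are those incident to the new auxiliary nodes.

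First I would dispose of the endpoints. Since $x,y \in V$ by hypothesis, neither endpoint of $\pi$ is an auxiliary node, so the only way $\pi$ could fail to already live in $G$ is by passing through some $y^*_i$ as an internal (non-endpoint) node. I would then argue this cannot happen for an unblocked path. If $y^*_i$ were an internal node of $\pi$, then $\pi$ would enter and leave $y^*_i$ along two distinct incident edges; but every edge incident to $y^*_i$ points toward $y^*_i$, so both the edge preceding and the edge following $y^*_i$ point into it, making $y^*_i$ a \emph{collider} on $\pi$. Because $y^*_i$ is a sink, it is an ancestor of nothing, so $y^*_i \in An(Z)$ holds for no conditioning set $Z \subseteq V$ (and in particular $y^*_i \notin An(\emptyset)$). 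Hence this collider is never opened, and $\pi$ would be blocked, contradicting the hypothesis that $\pi$ is unblocked.

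It follows that $\pi$ visits only nodes of $V$, and since $G^{E+}$ adds no edges among those nodes, every edge of $\pi$ is already present in $G$; thus $\pi$ is a path in $G$. It remains to verify that $\pi$ is \emph{unblocked} in $G$. The collider/non-collider status of each internal node of $\pi$ depends only on the orientations of its two adjacent edges, which are identical in $G$ and $G^{E+}$; and for any $v \in V$ the relation $v \in An(Z)$ is the same in both graphs, since the vertices $G^{E+}$ adds are exactly the childless auxiliary nodes, which can neither be ancestors of, nor create new ancestors of, nodes in $V$. Consequently $\pi$ meets the blocking criterion in $G$ precisely when it does in $G^{E+}$, completing the argument.

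This is essentially a one-idea proof, and the only point requiring care --- and hence the main obstacle --- is establishing that an auxiliary node can never occur as an internal \emph{non-collider} on an unblocked path. This is exactly where the sink property of $y^*_i$ (all incident edges incoming, no error term, no descendants) is indispensable; once that observation is in place, the invariance of edge orientations and of ancestor sets under the augmentation follows immediately.
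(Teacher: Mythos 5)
Your proof is correct and follows essentially the same route as the paper's: the key observation in both is that every edge incident to an auxiliary node points into it, so an auxiliary node appearing as an internal node of $\pi$ would be a collider that cannot be opened (auxiliary nodes have no descendants and nothing is conditioned on), contradicting that $\pi$ is unblocked. Your write-up is merely more explicit than the paper's about the endpoint case and about why the unblocked status transfers back to $G$, but the underlying argument is the same.
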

\begin{proof}
Let $Z=He(E)$.  The only edges added to $G^{E+}$ from $G$ are connected are pointed at $Z^*$.  As a result, if a member of $Z^*$ was an internal node on $\pi$, then it would be a collider, and the path would be blocked.  As a result, $\pi$ contains no members of $Z^*$ and must also be an unblocked path in $G$.  
\end{proof}

\bibliography{book}  
\bibliographystyle{named}
\end{document}